\documentclass[12pt]{article}
\setlength{\textwidth}{5.9in} \setlength{\textheight}{9.3in}
\setlength{\topmargin}{0pt} \setlength{\headsep}{0pt}
\setlength{\headheight}{0pt} \setlength{\oddsidemargin}{0pt}
\setlength{\evensidemargin}{0pt}
\usepackage[english]{babel}
\usepackage{mathrsfs}
\usepackage{amssymb}
\usepackage{hyperref}
\usepackage{amssymb,amsmath,amsfonts}
\newtheorem{theorem}{Theorem}

\newtheorem{corollary}[theorem]{Corollary}
\newtheorem{proposition}[theorem]{Proposition}


\begin{document}

\title{ On the number of $q$-ary  quasi-perfect codes with covering radius 2}

\date{}

\author{Alexander M. Romanov
\thanks {Sobolev Institute of Mathematics, 630090 Novosibirsk, Russia. 
Supported by the program of fundamental scientific researches of the SB RAS I.5.1, project N 0314-2019-0017. }
}

\maketitle

\begin{abstract}
In this paper we present a family of $q$-ary   nonlinear quasi-perfect codes  with covering radius 2. The
codes have   length $n = q^m$ and size $ M = q^{n - m - 1}$ where $q$ is a prime power, $q \geq 3$,  $m$ is an integer, $m \geq 2$.
We  prove that there are more than   $q^{q^{cn}}$ nonequivalent such codes of length $n$, for all sufficiently large $n$ and a constant
$c = \frac{1}{q} - \varepsilon$.
\end{abstract}

\section{Introduction}\label{sec:intr}


Let $\mathbb{F}_{q}^n$ be a vector space of dimension $n$ over the finite field $\mathbb{F}_{q}$,  where $q$ is a prime power.
An arbitrary subset  $ {\mathcal C}$ of  $\mathbb{F}_{q}^n$ is called a \emph{$q$-ary  error correcting  code}
(briefly a $q$-ary code).
The \emph{length} of a code ${\mathcal C} \subseteq \mathbb{F}_{q}^n$ is  the dimension of the vector space  $\mathbb{F}_{q}^n$.
We assume that the all-zero  vector ${\bf 0}$ always belongs to the code, unless otherwise specified.
A code is called \emph{linear} if it  is a subspace   of  $\mathbb{F}_{q}^n$.
Otherwise, the code is called \emph{nonlinear}.
The \emph{dimension} of a  linear code   is  the dimension of the  subspace, denoted by $\text{dim}({\cal C})$.
The vectors belonging to a code ${\cal C}$, we will call \emph{codewords}.
The \emph{Hamming weight} of a vector  ${\bf x}  \in \mathbb{F}_{q}^{n}$ is   the number of nonzero coordinate positions of ${\bf x}$, denoted by $wt(\bf x)$.
The \emph{Hamming distance} between two vectors ${\bf x}$, ${\bf y} \in \mathbb{F}_{q}^n$ is the number of coordinate positions in which they differ, denoted by $d({\bf x}, {\bf y})$.
The \emph{minimum distance} of a   code $\mathcal C$  is  the smallest Hamming distance between two different codewords of $\mathcal C$, denoted by $d(\mathcal C)$.

Define the \emph{packing radius} $e({\mathcal C})$ of a code ${\cal C} \subseteq  \mathbb{F}_{q}^{n}$.
Let  $e({\mathcal C}) = {\lfloor{(d - 1)}/{2}\rfloor}$, where $d$ is the minimum distance of ${\mathcal C}$.
We also define the \emph{covering radius} $\rho({\mathcal C})$ of a code ${\mathcal C}$.
Let
\begin{equation*}
\rho({\mathcal C}) =  \max\limits_{{\bf x} \in \mathbb{F}_{q}^{n}}\min\limits_{{\bf c} \in {\mathcal C}}d({\bf x},{\bf c}).
\end{equation*}

A code ${\cal C} \subseteq  \mathbb{F}_{q}^{n}$ is called \emph{perfect} if  $\rho({\mathcal C}) = e({\mathcal C})$.
If  $\rho({\mathcal C}) = e({\mathcal C}) + 1$, then  ${\mathcal C}$ is called \emph{quasi-perfect}.  See  \cite[p. 19]{mac}.

Two codes ${\mathcal C}_1, {\mathcal C}_2 \subseteq \mathbb{F}_{q}^n $ are said to be \emph{equivalent} if there exists a vector
${\bf v} \in \mathbb{F}_{q}^n $ and a $n \times n$ monomial matrix $M$ over $\mathbb{F}_{q}$ such that
$${\mathcal C}_2 =\{({\bf v} + {\bf c}M) \, \,  |  \, \,  {\bf c} \in  {\mathcal C}_1\}.$$

A codeword ${\bf c}  = (c_1, c_2, \ldots, c_n)$   is called \emph{even} if $ \sum_{i = 1}^n c_i = 0$.
A code is called \emph{even} if it contains only even codewords.

We will use the standard notation  $(n, M, d; \rho)_q$   to denote a $q$-ary code of length n,  size $M$, minimum
distance $d$, and covering radius $\rho$.

In \cite{vas}  Vasil'ev  used the switching construction to construct a family of binary nonlinear  perfect codes
with packing radius $e = 1$ and length $n = 2^m -1$ , $m \geq 4$.
In \cite{vas}  Vasil'ev  proved that there are more than   $q^{q^{cn}}$ nonequivalent such codes of length $n$,
for all sufficiently large $n$ and a constant $c = \frac{1}{2} - \varepsilon$.
In \cite{etz} Etzion and  Vardy used the switching construction to construct  binary  nonlinear perfect codes of full-rank.
In \cite{phe} Phelps and Villanueva used the switching construction to construct $q$-ary  nonlinear  perfect codes with ranks
of different sizes.

In this paper we use the switching construction to construct a family of $q$-ary   nonlinear quasi-perfect codes  with covering radius $\rho = 2$.
The codes we offer are of length $n = q^m$ and size $ M = q^{n - m - 1}$ where $q$ is a prime power, $q \geq 3$, $m$ is an integer, $m \geq 2$.
In this paper   we prove that there are more than   $q^{q^{cn}}$ nonequivalent such codes of length $n$, for all sufficiently large $n$ and a constant
$c = \frac{1}{q} - \varepsilon$.

In \cite{rom}, using the  concatenation construction, a family of $q$-ary even   quasi-perfect codes  with covering radius $\rho = 2$ was constructed.
The codes proposed in \cite{rom} have length  $n = q^m$ and size $ M = q^{n - m - 1}$ where $q$ is a prime power, $q \geq 3$, $m$ is an integer, $m \geq 4$.
It was shown   \cite{rom}  that there are more than   $q^{q^{cn}}$ nonequivalent such codes of length $n$,
for all sufficiently large $n$ and a constant $c$.

For $q = 2$, the construction from \cite{rom} is the concatenation construction of binary  extended   perfect codes proposed by Phelps in \cite{phe4}.


\section{Generalized Reed-Muller codes}\label{sec:gen}


Let $\mathbb{F}_{q}[X_1, X_2, \ldots, X_m]$ be the algebra of polynomials in $m$ variables over the field $\mathbb{F}_{q}$.
Let a  polynomial $f$ be in $\mathbb{F}_{q}[X_1, X_2, \ldots, X_m]$, then by $\text{deg}(f)$ we denote  its total degree.
Let $AG(m,q)$  be the $m$-dimensional affine space   over the field $\mathbb{F}_{q}$.
Let $n = q^m$ and the points $P_1, P_2, \ldots, P_n$ of $AG(m,q)$  be arranged in some fixed order.
Let  $r$ be an integer such that $0\leq r \leq (q-1)m$.
Then the generalized Reed-Muller code  \cite{kas} of order $r$  over the field $\mathbb{F}_{q}$ is the following subspace:
\begin{equation*}
{RM}_q(r,m) = \big\{ (f(P_1), f(P_2), \ldots, f(P_n)) \, \big| \,
f \in \mathbb{F}_{q}[X_1, X_2, \ldots, X_m], \text{deg}(f) \leq r \big\}.
\end{equation*}
The code ${RM}_q(r,m)$ has the following parameters (\cite{kas}, \cite[Theorem 5.5]{key}):
\begin{enumerate}
\item  the length  is  $q^m$;
\item the dimension is
\begin{equation}\label{eq1}
\sum_{k = 0}^{m}{(-1)}^k {m \choose k }{m + r - kq \choose r - kq };
\end{equation}
\item the minimum distance is
\begin{equation}\label{eq2}
(q - b)q^{m - a - 1},
\end{equation}
where $r = (q - 1)a + b$ and $ 0 \leq b < q - 1$.
\end{enumerate}

In what follows, we are only interested in the generalized  Reed-Muller codes ${RM}_q((r,m)$ of order $ r = (q - 1)m - 2$.
In the binary case, the code ${RM}_q((r,m)$ of order $ r = (q - 1)m - 2$ is  the extended Hamming code of length $ n = 2^m$.
The  dimension of the extended Hamming code of length $ n = 2^m$  is  $n - m - 1$   and the minimum distance is $4$.
\begin{proposition}
\label{Pr:1}
Let $q \geq 3$, $m \geq 1$, and $n = q^m$.
Then  the dimension of the generalized Reed-Muller code ${RM}_q((q-1)m - 2,m)$ is  ${n - m - 1}$
and the minimum distance is $3$.
\end{proposition}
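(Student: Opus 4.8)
The plan is to derive both the dimension and the minimum distance directly from the general parameters of $RM_q(r,m)$ stated above, by specializing to $r = (q-1)m - 2$.

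First I would compute the dimension using the formula in \eqref{eq1}. With $r = (q-1)m - 2$, for $k = 0$ the term is ${m + r \choose r} = {(q-1)m + m - 2 \choose (q-1)m - 2} = {qm - 2 \choose (q-1)m - 2} = {qm - 2 \choose m}$. The idea is to evaluate the alternating sum explicitly, but a cleaner route uses a known duality: the code $RM_q((q-1)m - 2, m)$ is, up to the usual order relations on generalized Reed--Muller codes, closely tied to the dual of a low-order code. Concretely, $(q-1)m$ is the top degree (the whole space $\mathbb{F}_q^n$), so $r = (q-1)m - 2$ is the second-from-top order, and one expects its codimension to match the dimension of $RM_q(1, m)$, which is $m+1$ (the constant function plus the $m$ coordinate functions). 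Thus the dimension should be $q^m - (m+1) = n - m - 1$. I would justify this either by invoking the standard fact that the dual of $RM_q(r,m)$ is $RM_q((q-1)m - 1 - r, m)$ and that $RM_q(1,m)$ has dimension $m+1$, or, if I want to stay self-contained, by directly verifying that the alternating binomial sum in \eqref{eq1} evaluates to $q^m - m - 1$; the first few terms $k=0,1$ already suffice once one observes that ${m + r - kq \choose r - kq}$ vanishes for larger $k$ because the upper index becomes too small.

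Next I would compute the minimum distance using \eqref{eq2}. I need to write $r = (q-1)m - 2$ in the form $r = (q-1)a + b$ with $0 \le b < q - 1$. Since $q \ge 3$ gives $q - 2 \ge 1$, I can take $a = m - 1$ and $b = q - 2$: indeed $(q-1)(m-1) + (q-2) = (q-1)m - (q-1) + q - 2 = (q-1)m - 1$, which is off by one, so I must instead verify the correct split carefully. The right choice is $a = m-1$, $b = q-2$ only if $(q-1)(m-1)+(q-2)$ equals $r$; recomputing, $(q-1)(m-1) + (q-2) = (q-1)m - (q-1) + (q-2) = (q-1)m - 1 \ne r$, so the correct values are $a = m - 1$ and $b = q - 3$ when $q \ge 3$ and $b$ must satisfy $0 \le b < q-1$. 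Checking: $(q-1)(m-1) + (q-3) = (q-1)m - (q-1) + (q-3) = (q-1)m - 2 = r$, and $0 \le q - 3 < q - 1$ holds for $q \ge 3$. Substituting into \eqref{eq2} gives minimum distance $(q - b)q^{m - a - 1} = (q - (q-3))\,q^{m - (m-1) - 1} = 3 \cdot q^{0} = 3$.

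The main obstacle I anticipate is purely bookkeeping: pinning down the correct $(a,b)$ decomposition, since the boundary case $b = q - 3$ sits close to the constraint $b < q - 1$ and an off-by-one error (as illustrated above) is easy to make. Once the decomposition $r = (q-1)(m-1) + (q-3)$ is fixed, the minimum-distance computation is immediate. For the dimension, the cleanest presentation will be to cite the duality $RM_q(r,m)^{\perp} = RM_q((q-1)m - 1 - r, m)$ together with $\dim RM_q(1,m) = m+1$, avoiding a direct manipulation of the alternating sum; if a self-contained argument is preferred, I would isolate the nonzero terms of \eqref{eq1} and simplify, but that risks the same kind of binomial-index bookkeeping and is the less robust path.
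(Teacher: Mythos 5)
Your proposal is correct and follows essentially the same route as the paper: the dimension is obtained from the duality $RM_q(r,m)^{\perp} = RM_q((q-1)m-1-r,m)$ together with $\dim RM_q(1,m)=m+1$ and the relation $\dim(\mathcal{C})+\dim(\mathcal{C}^{\perp})=n$, and the minimum distance from the decomposition $r=(q-1)(m-1)+(q-3)$ substituted into the distance formula. The only difference is expository (your initial off-by-one trial with $b=q-2$ before settling on $b=q-3$), which does not affect correctness.
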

\begin{proof}
If a code $\cal C$ belongs to the class of generalized Reed-Muller codes, then the dual code ${\cal C}^{\perp}$ also belongs to this class \cite{del}.
Assume that the length of a code $\cal C$ is  $n$.
Then
\begin{equation}\label{eq3}
\text{dim}({\cal  C}) + \text{dim}({\cal C}^{\perp}) = n.
\end{equation}
From \cite[Theorem 5.8]{key}, for $r < (q - 1)m$, we have
$${{RM}_q(r,m)}^{\perp} = {RM}_q((q-1)m - 1 - r,m).$$
Therefore, the order of the code dual to the code ${RM}_q((q-1)m - 2, m)$  is 1.
By formulas  (\ref{eq1}) and (\ref{eq3}), the dimension of the code ${RM}_q((q-1)m - 2,m)$ is   ${n - m - 1}$.

Now we show that for all $m \geq 1$  the minimum distance of   the generalized Reed-Muller code ${RM}_q((q-1)m - 2, m)$  is $3$.
Since
$$r = (q-1)m - 2 = (q - 1)a + b \,\,\, \text{and} \, \, \, 0 \leq b < q - 1, $$
for $q \geq 3$,  we have $a = m - 1$ and  $b = q - 3$.
Therefore, by the formula (\ref{eq2}), the minimum distance of   the generalized Reed-Muller code ${RM}_q((q-1)m - 2, m)$  is $3$.
\end{proof}

\begin{proposition}
\label{Pr:2}
The generalized Reed-Muller code ${RM}_q((q-1)m - 2, m)$  is a $q$-ary quasi-perfect code with covering radius $\rho = 2$.
\end{proposition}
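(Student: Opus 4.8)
The plan is to reduce the statement to a single assertion about the covering radius. By Proposition~\ref{Pr:1} the code ${\mathcal C} = {RM}_q((q-1)m-2,m)$ has minimum distance $3$, so its packing radius is $e({\mathcal C}) = \lfloor (3-1)/2 \rfloor = 1$. By definition a code is quasi-perfect with covering radius $2$ precisely when $\rho({\mathcal C}) = e({\mathcal C}) + 1 = 2$. Hence the whole proposition follows once I establish that $\rho({\mathcal C}) = 2$.

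To compute $\rho({\mathcal C})$ I would pass to the parity-check matrix. Since ${\mathcal C}^{\perp} = {RM}_q(1,m)$ has dimension $m+1$ (this was already used in the proof of Proposition~\ref{Pr:1}), a parity-check matrix $H$ of ${\mathcal C}$ is an $(m+1)\times q^m$ matrix whose rows are the evaluations of the affine functions $1, X_1, \ldots, X_m$ at the points of $AG(m,q)$. Reading this column by column, the columns of $H$ are exactly the vectors $h_a = (1, a_1, \ldots, a_m)^{\top}$ as $a = (a_1,\ldots,a_m)$ ranges over all of $\mathbb{F}_{q}^m$; that is, every affine point, lifted into the affine hyperplane $\{x_0 = 1\} \subseteq \mathbb{F}_{q}^{m+1}$, occurs exactly once. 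I would then invoke the standard description of the covering radius of a linear code: $\rho({\mathcal C})$ is the least integer $r$ such that every syndrome $s \in \mathbb{F}_{q}^{m+1}$ is an $\mathbb{F}_{q}$-linear combination of at most $r$ columns of $H$, because the minimum weight of a coset equals the least number of columns representing its syndrome.

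With this reformulation the computation is short. Writing a syndrome as $s = (s_0, \mathbf{s})$ with $s_0 \in \mathbb{F}_{q}$ and $\mathbf{s} \in \mathbb{F}_{q}^m$, I would treat three cases. If $s = \mathbf{0}$, no columns are needed. If $s_0 \neq 0$, then $s = s_0\, h_a$ for $a = \mathbf{s}/s_0$, so one column suffices. If $s_0 = 0$ and $\mathbf{s} \neq \mathbf{0}$, then a single column $\alpha h_a = (\alpha, \alpha a)$ can never have vanishing first coordinate unless it is the zero vector, so one column is impossible; on the other hand $s = h_a - h_b$ with $a = b + \mathbf{s}$ and $b$ arbitrary (so that $a \neq b$) represents $s$ using exactly two distinct columns. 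This simultaneously shows $\rho({\mathcal C}) \le 2$ and, via the last case, exhibits a syndrome requiring two columns, so $\rho({\mathcal C}) \ge 2$; hence $\rho({\mathcal C}) = 2$.

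I expect the only genuine step, as opposed to routine verification, to be the identification of the columns of the parity-check matrix with the lifted affine points, together with the invocation of the syndrome-column characterization of the covering radius. Once these are in place the three-case argument is elementary, and in particular no delicate estimate on the weight distribution of generalized Reed-Muller codes is required.
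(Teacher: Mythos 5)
Your proof is correct, but it takes a genuinely different route from the paper. The paper's proof is a two-line citation: the dual code ${RM}_q(1,m)$ is a linear two-weight code (its nonzero weights are $q^m$ and $q^m-q^{m-1}$), and by Delsarte's theorem the covering radius of a code is bounded by its external distance, i.e.\ by the number of nonzero weights of the dual, giving $\rho\le 2$; the paper does not spell out the matching lower bound. You instead work directly with the parity-check matrix, whose columns are the lifted affine points $(1,a)^{\top}$, $a\in\mathbb{F}_q^m$, and use the standard syndrome--column characterization of the covering radius; your three-case analysis shows both that every syndrome is a combination of at most two columns and that the syndromes $(0,\mathbf{s})$ with $\mathbf{s}\neq\mathbf{0}$ genuinely require two, so you get $\rho=2$ exactly, not just $\rho\le 2$. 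Your argument is therefore more elementary and self-contained (it needs no weight-distribution or association-scheme machinery), and it has the side benefit of making explicit the column description of $H$ that the paper only introduces later, in Section~\ref{sec:switch}; the paper's approach buys brevity and situates the result within the general theory of two-weight codes and uniformly packed codes, which the paper goes on to mention. Both arguments correctly reduce quasi-perfection to $\rho=e+1$ via the minimum distance $3$ from Proposition~\ref{Pr:1}.
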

\begin{proof}
The  first  order generalized Reed-Muller code is a linear two-weight code \cite{del}.
According to \cite[Theorem 5.10]{del2}, the code dual to a linear two-weight code has  covering radius  $\rho = 2$.
Therefore, the generalized Reed-Muller  codes of order $(q - 1)m - 2$ are  quasi-perfect codes with covering radius $\rho = 2$.
\end{proof}

From \cite[Theorem 3.11]{van} it follows that linear quasi-perfect codes are uniformly packed codes.
Uniformly packed codes attain the Johnson bound and are optimal codes \cite[Theorem 1.3]{van}.


\section{Switching construction}\label{sec:switch}


The parity-check matrix $H$ of  ${RM}_q((q-1)m - 2, m)$ is $ (m + 1)\times q^m$ matrix
that contains the all-unity vector of length $ n = q^m$ and all transposed  vectors
of $\mathbb{F}_{q}^m$, see \cite{del}.
Let $H = \left[{\bf h}_1, {\bf h}_2,  \ldots,  {\bf h}_n\right]$, where
${\bf h}_1, {\bf h}_2,  \ldots,  {\bf h}_n$ are columns of the matrix $H$.
The affine space $AG(m,q)$ will be regarded  as the incidence geometry.
Points $P_1, P_2, \ldots, P_n$ of the  affine space $AG(m,q)$ correspond
to the coordinates of the vector space $\mathbb{F}_{q}^{n}$.

Let ${\bf x} = (x_1, x_2, \dots, x_n) \in \mathbb{F}_{q}^{n}$,
then  the \emph{support} of the vector ${\bf x}$ is the set
$$supp({\bf x}) = \{ i \, \, | \, \, x_i \neq 0 \}.$$
By Proposition~\ref{Pr:1},  the minimum distance  of the  code ${RM}_q((q-1)m - 2,m)$ is $3$ if  $q \geq 3$.
A codeword of weight $3$ of the  code ${RM}_q((q-1)m - 2, m)$ we will call a \emph{triple}.
Let ${\bf c}  = (c_1, c_2, \ldots, c_n)$  be a triple and $supp({\bf c}) = \{i, j, k\}$. Then
\begin{enumerate}
\item
the triple ${\bf c}$ lies on a  line $l$ if  $\{P_i, P_j, P_k\} \subseteq l$;
\item
the corresponding columns ${\bf h}_i, {\bf h}_j, {\bf h}_k$ are linearly dependent, i.e.,
$$c_i{\bf h}_i + c_j{\bf h}_j + c_k{\bf h}_k = \bf 0.$$
\end{enumerate}

Let $i \in \{1, \dots, n \}$, then by ${\cal R}_i$ we denote a subspace spanned by the set of all triples of ${RM}_q((q-1)m - 2, m)$
having  $1$ in the  $i$th coordinate.
By definition,  the minimum distance of  ${\cal R}_i$ is 3.

\begin{proposition}
\label{Pr:3}
Let $q \geq 3$,  $m \geq 1$, and $n = q^m$. Let ${\cal R}_i \subseteq {RM}_q((q-1)m - 2, m)$. Then   for any $i \in \{1, 2, \ldots, n\}$
the dimension of ${\cal R}_i$  is $n - [m]_q - 1$,
where $[m]_q$ is the $q$-analogue of the natural number $m$.
\end{proposition}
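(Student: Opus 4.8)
The plan is to organise the triples through $P_i$ according to the lines of $AG(m,q)$ passing through $P_i$, and then assemble $\dim(\mathcal{R}_i)$ from local contributions. By the affine invariance of generalized Reed--Muller codes I would first assume that $P_i$ is the origin $\mathbf 0$. By the collinearity description of triples, every triple $\mathbf c$ with $c_i = 1$ has support $\{P_i,P_j,P_k\}$ on a line through $P_i$, and that line is one of the $N := [m]_q$ lines $\ell_1,\dots,\ell_N$ through the origin (there are $[m]_q$ directions). Writing $L_u$ for the span of those triples supported on $\ell_u$, I obtain $\mathcal{R}_i = \sum_{u=1}^N L_u$. Since every non-origin point lies on exactly one $\ell_u$, the punctured supports $\ell_u\setminus\{P_i\}$ are pairwise disjoint and cover all $n-1$ non-origin coordinates.

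Next I would identify each $L_u$ with the full line code and compute its dimension. Parametrising $\ell_u$ by $t\mathbf d$, $t\in\mathbb F_q$, a vector supported on $\ell_u$ lies in ${RM}_q((q-1)m-2,m)$ exactly when it satisfies the two independent scalar conditions $\sum_t c_t = 0$ and $\sum_t t\,c_t = 0$ (the all-unity row of $H$ and the $m$ coordinate rows, the latter all collapsing to $\sum_t t\,c_t = 0$); this is a $[q,q-2,3]$ code. For distinct nonzero $a,b$, these conditions restricted to $\{0,a,b\}$ have a one-dimensional solution space, and every nonzero solution has weight $3$; normalising to $c_0 = 1$ gives the triple $w_{a,b}$. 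Fixing one nonzero parameter $a_1$ and letting $a_j$ ($j=2,\dots,q-1$) range over the remaining nonzero parameters, the $q-2$ triples $w_{a_1,a_j}$ are linearly independent, because the coordinate $a_j$ lies in the support of $w_{a_1,a_j}$ alone. As $q-2$ is precisely the dimension of the line code, these triples form a basis, so $L_u$ is the whole line code and $\dim L_u = q-2$.

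Finally I would glue the local data by a projection. Let $\pi$ delete coordinate $i$. The subspaces $\pi(L_u)$ have pairwise disjoint supports, so their sum is direct; and $\pi$ is injective on each $L_u$ and on $\mathcal{R}_i$, since $\ker\pi$ consists of vectors supported on $\{i\}$ while the code has minimum distance $3$. Hence $\dim(\mathcal{R}_i) = \dim \pi(\mathcal{R}_i) = \sum_{u=1}^N \dim \pi(L_u) = N(q-2) = [m]_q(q-2)$. The identity $[m]_q(q-1) = q^m - 1 = n-1$ then yields $[m]_q(q-2) = (n-1) - [m]_q = n - [m]_q - 1$, as required.

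I expect the main obstacle to be the local spanning claim: showing that the triples through a single point already fill the whole $(q-2)$-dimensional line code rather than a proper subcode. The triangular independence of the $w_{a_1,a_j}$ is what makes this clean and avoids appealing to general results on MDS codes being spanned by minimum-weight words; the rest is bookkeeping of disjoint supports, handled uniformly by $\pi$.
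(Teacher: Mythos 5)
Your proof is correct and takes essentially the same route as the paper's: decompose the generating triples of ${\cal R}_i$ according to the $[m]_q$ lines of $AG(m,q)$ through $P_i$, show each line contributes exactly $q-2$ independent triples, and multiply. The paper handles the local spanning by citing that every line lies in the code and that the code is generated by minimum-weight words, and leaves the disjoint-support gluing implicit, whereas you verify both directly from the parity-check matrix (the $[q,q-2,3]$ line code, the triangular independence of the $w_{a_1,a_j}$, and the puncturing map $\pi$); this makes your version more self-contained but does not change the underlying argument.
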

\begin{proof}
For $q \geq 3$, every line of the affine space $AG(m,q)$  is in ${RM}_q((q-1)m - 2, m)$, see \cite{del,kasl}.
The generalized Reed-Muller code ${RM}_q((q-1)m - 2, m)$  is spanned by its minimum-weight vectors \cite{del,din}.
In an affine  space $AG(m,q)$, each point has  $\frac{n - 1}{q - 1}$ lines passing through it,
and each line contains $q$ points.
For every two distinct points, there is exactly one line that contains both points.
Thus, for each line, there are $q - 2$ linear independent triples that  lie on this line.
Therefore, the number of linear independent  triples that
generate ${\cal R}_i$ is $(q - 2) \frac{n - 1}{q - 1} = n - [m]_q - 1$.
\end{proof}

For $m = 1$, generalized Reed-Muller codes are  extended Reed-Solomon codes   \cite[p. 296]{mac}.
For $m = 1$, the affine space $AG(1,q)$  is a line,
and for ${\cal R}_i \subseteq {RM}_q(q - 3, 1)$,
$$\text{dim}({RM}_q(q - 3, 1)) = \text{dim}({\cal R}_i) = q - 2. $$

\begin{proposition}
\label{Pr:4}
Let $q \geq 3$, $m \geq 1$. Let ${\bf c}$ be a triple and    ${\bf c} \in {\cal R}_i \subseteq {RM}_q((q-1)m - 2, m)$. Then ${\bf c}$   lies on a  line passing through the point $P_i$.
\end{proposition}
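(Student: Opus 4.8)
The plan is to give a structural description of $\mathcal{R}_i$ as a sum of the subcodes ``supported on a single line through $P_i$'', and then to finish by counting nonzero coordinates, using only that the minimum distance equals $3$. First I would record the elementary fact that \emph{every} triple lies on a line. If $\mathbf{c}$ is a triple with $supp(\mathbf{c}) = \{j,k,l\}$, then the columns $\mathbf{h}_j,\mathbf{h}_k,\mathbf{h}_l$ are linearly dependent; since each column has the form $(1,P)^{\top}$ and any two distinct such columns are independent, the dependence relation must have all three coefficients nonzero, which says precisely that $P_j,P_k,P_l$ are affinely dependent, i.e. collinear. Applying this to the generators of $\mathcal{R}_i$, namely the triples with $c_i = 1$: each such generator has $i \in supp(\mathbf{c})$, so $P_i$ lies on the line it determines, and the generator is therefore supported on one of the $[m]_q$ lines through $P_i$.

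For each line $L_s$ through $P_i$, let $\mathcal{D}_s$ denote the set of codewords of $RM_q((q-1)m-2,m)$ whose support is contained in $L_s$; this is a subcode. By the previous paragraph every generator of $\mathcal{R}_i$ lies in some $\mathcal{D}_s$, so $\mathcal{R}_i \subseteq \sum_s \mathcal{D}_s$. The key geometric input, which I would state explicitly, is that two distinct lines through $P_i$ meet only at $P_i$; consequently the subcodes $\mathcal{D}_s$ have pairwise disjoint supports away from the single coordinate $P_i$. Note that I need only this containment, not an exact decomposition of $\mathcal{R}_i$, so no spanning computation for $\mathcal{D}_s$ is required.

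The finish is a weight argument. Given a triple $\mathbf{c} \in \mathcal{R}_i$, I would write $\mathbf{c} = \sum_s \mathbf{w}_s$ with $\mathbf{w}_s \in \mathcal{D}_s$. Since the minimum distance is $3$, each nonzero $\mathbf{w}_s$ has $wt(\mathbf{w}_s) \geq 3$ and hence at least two nonzero coordinates at points other than $P_i$. By the disjointness of the off-$P_i$ supports these contributions cannot cancel, so the off-$P_i$ part of $\mathbf{c}$ already carries at least $2k$ nonzero coordinates, where $k$ is the number of nonzero terms $\mathbf{w}_s$. Thus $3 = wt(\mathbf{c}) \geq 2k$, forcing $k \leq 1$; since $\mathbf{c} \neq \mathbf{0}$ we get $k = 1$. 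Therefore $\mathbf{c} = \mathbf{w}_{s_0}$ is supported on the single line $L_{s_0}$, which passes through $P_i$, as claimed.

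I expect the main obstacle to be the bookkeeping at the coordinate $P_i$ itself. Because all the supports of the $\mathcal{D}_s$ overlap at $P_i$, cancellation there is possible, so the weight bound cannot be read off the full support and must instead be phrased in terms of the off-$P_i$ coordinates, where the supports are genuinely disjoint. Ensuring that each nonzero line-component still leaves at least two surviving coordinates outside $P_i$ is the delicate point; once the containment $\mathcal{R}_i \subseteq \sum_s \mathcal{D}_s$ and the pairwise-intersection property of the lines through $P_i$ are in place, the rest is routine.
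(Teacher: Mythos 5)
Your proof is correct and follows essentially the same route as the paper's: decompose an element of ${\cal R}_i$ into components supported on the individual lines through $P_i$, use that two distinct such lines meet only at $P_i$, and conclude by a weight count that a weight-$3$ codeword can involve only one line. The paper states this much more tersely ("their linear combination contains more than three points"); your version makes explicit the two points the paper leaves implicit, namely that each nonzero line-component contributes at least two coordinates off $P_i$ because the minimum distance is $3$, and that these off-$P_i$ contributions cannot cancel.
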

\begin{proof}
If the triples lie on same line, then their linear combination lies on the same  line.
In an affine  space $AG(m,q)$, the intersection of any two distinct lines contains exactly one point.
Therefore, triples that lie on distinct lines passing through the point $P_i$ have only one common point $P_i$
and their linear combination contains more than three points.
\end{proof}
By Proposition~\ref{Pr:2}, the  code ${RM}_q((q-1)m - 2, m)$  is a $q$-ary quasi-perfect code with covering radius $\rho = 2$.
For a given  code ${RM}_q((q-1)m - 2, m)$ and $t = 0, 1, 2$, define
\begin{equation*}
{RM}_q^{(t)}((q-1)m - 2, m) = \big\{{\bf x} \in  \mathbb{F}_{q}^n  \,\, \big| \,\, d\big({\bf x},{RM}_q((q-1)m - 2, m)\big) = t \big\}.
\end{equation*}
The sets ${RM}_q^{(t)}((q-1)m - 2, m)$  are called the \emph{subconstituents} of  ${RM}_q((q-1)m - 2, m)$.
\begin{proposition}
\label{Pr:5}
Let $q \geq 3$, $m \geq 2$. Let ${\bf x} \in {RM}_q^{(2)}((q-1)m - 2, m)$, $wt({\bf x})= 2$, and $supp({\bf x}) = \{j,k\}$.
Let $i \neq j$, $i \neq k$, and the points $P_i, P_j, P_k$  are collinear.
Then there is a triple ${\bf c} \in {\cal R}_i \subset {RM}_q((q-1)m - 2, m)$ such that  $supp({\bf c}) = \{i,j,k\}$ and $d({\bf c}, {\bf x}) =2$.
\end{proposition}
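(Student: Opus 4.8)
Write ${\bf x}_j = a$ and ${\bf x}_k = b$ with $a,b \in \mathbb{F}_q\setminus\{0\}$. The plan is to first extract the only constraint that the hypothesis ${\bf x}\in {RM}_q^{(2)}((q-1)m-2,m)$ places on the pair $(a,b)$, and then to write down the triple supported on the line through $P_i,P_j,P_k$ explicitly and tune its scalar multiple so that it meets ${\bf x}$ in exactly one coordinate. Throughout I would work with the parity-check matrix $H$, whose columns are ${\bf h}_\ell=(1,P_\ell)$, so that membership in the code and collinearity of points are both read off from linear relations among columns.

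First I would determine the coset of ${\bf x}$: its syndrome is $a{\bf h}_j+b{\bf h}_k$. If ${\bf x}$ were at distance $1$ from the code, this syndrome would be a nonzero scalar multiple of a single column ${\bf h}_p$. But the first coordinate of $a{\bf h}_j+b{\bf h}_k$ is $a+b$, whereas every column of $H$ has first coordinate $1$. Hence, if $a+b\neq 0$, then $a{\bf h}_j+b{\bf h}_k=(a+b){\bf h}_p$ for the affine point $P_p=(a+b)^{-1}(aP_j+bP_k)$, which lies on the line through $P_j,P_k$ and is distinct from both (the barycentric coefficients $a/(a+b)$, $b/(a+b)$ are nonzero). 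This produces a codeword at distance $1$ from ${\bf x}$, contradicting ${\bf x}\in {RM}_q^{(2)}$. Therefore $a+b=0$, i.e. ${\bf x}_k=-{\bf x}_j$.

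Next I would build ${\bf c}$. Since $P_i,P_j,P_k$ are collinear with $P_i\neq P_j,P_k$, write $P_i=(1-t)P_j+tP_k$ with $t\in\mathbb{F}_q\setminus\{0,1\}$; this gives the column relation $(1-t){\bf h}_j+t{\bf h}_k-{\bf h}_i={\bf 0}$. Thus for every $\mu\in\mathbb{F}_q\setminus\{0\}$ the vector ${\bf c}$ with ${\bf c}_i=-\mu$, ${\bf c}_j=\mu(1-t)$, ${\bf c}_k=\mu t$ and zeros elsewhere is a weight-$3$ codeword, i.e. a triple with $supp({\bf c})=\{i,j,k\}$; since ${\bf c}_i\neq 0$, scaling by $(-\mu)^{-1}$ exhibits a triple with $1$ in the $i$th coordinate, so ${\bf c}\in{\cal R}_i$. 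Choosing $\mu=a(1-t)^{-1}$ forces ${\bf c}_j=\mu(1-t)=a={\bf x}_j$ (agreement at $j$), while ${\bf c}_i=-\mu\neq 0={\bf x}_i$ (difference at $i$). It remains only to check ${\bf c}_k\neq{\bf x}_k$: here ${\bf c}_k=\mu t=at(1-t)^{-1}$, and since ${\bf x}_k=-a$, the equality ${\bf c}_k={\bf x}_k$ would read $t(1-t)^{-1}=-1$, i.e. $t=t-1$, which is impossible. Hence ${\bf c}$ agrees with ${\bf x}$ only at coordinate $j$, giving $d({\bf c},{\bf x})=2$.

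The step I expect to be the crux is the derivation $a+b=0$, since that is the only genuine use of the hypothesis. Without it the prescribed $P_i$ could be exactly the point for which both $j$ and $k$ match, collapsing the distance to $1$ — and that configuration is precisely what ${\bf x}\in {RM}_q^{(2)}$ rules out. Once $a+b=0$ is established, the verification that the distance equals exactly $2$ is the short field computation above.
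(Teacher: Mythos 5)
Your proof is correct and follows essentially the same route as the paper: take the weight-$3$ codeword supported on the line through $P_i,P_j,P_k$ with $supp({\bf c})=\{i,j,k\}$ and rescale it to agree with ${\bf x}$ at $j$. The paper leaves implicit why the rescaled triple cannot also agree at $k$ (that would put ${\bf x}$ at distance $1$ from the code); you make this explicit by first deriving $x_j+x_k=0$ from the syndrome and then checking the mismatch at $k$ by direct computation, which is a filled-in detail rather than a different argument.
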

\begin{proof}
Consider ${\bf x} = (x_1, x_2, \ldots, x_n) \in {RM}_q^{(2)}((q-1)m - 2, m)$, $n = q^m$.
Let $wt({\bf x})= 2$ and   $supp({\bf x}) = \{j,k\}$.
Then there are $q - 2$ linear independent triples that lie on a line  $l$ passing through the points $P_j$ and $P_k$.
Let $i \neq j$, $i \neq k$, and the points $P_i, P_j, P_k$  are collinear.
Then there is a triple ${\bf c} =  (c_1, c_2, \ldots, c_n) $ lies on the line $l$   such that $supp({\bf c}) = \{i,j,k\}$.
For some scalar $\lambda \in \mathbb{F}_{q}\setminus \{0\}$, either $x_j = \lambda c_j$ or $x_k = \lambda c_k$, and $d({\bf x},\lambda{\bf c}) = 2$.
\end{proof}

Let ${\bf e}_i$ denote the vector of length $n$ having all components equal to zero, except the $i$th component which
contains a one.
Let ${\bf x} \in{RM}_q((q-1)m - 2, m)$,  $\lambda \in \mathbb{F}_{q}\setminus \{0\}$.
Let ${\cal R}_i + {\bf x} + \lambda{\bf e}_i$  be a translate of ${\cal R}_i + {\bf x}$.
We shall define a \emph{switching} to be the process of the replacing the coset
${\cal R}_i + {\bf x}$ with the translate ${\cal R}_i + {\bf x} + \lambda{\bf e}_i$.

\begin{theorem}
\label{Pr:6}
Let $q\geq 3$, $m \geq 2$, and $n = q^m$. Let ${\cal R}_i + {\bf x} \subset {RM}_q((q-1)m - 2, m )$.
Then
$$ {\cal C}' = \Big( {RM}_q((q-1)m - 2, m ) \setminus ({\cal R}_i + {\bf x})\Big) \cup \Big( {\cal R}_i + {\bf x} + \lambda{\bf e}_i \Big)$$
is a $q$-ary nonlinear quasi-perfect code with  parameters $(n, q^{n - m - 1}, 3; 2)_q$,
for any  $\lambda \in \mathbb{F}_{q}\setminus \{0\}$ and for any  $i \in\{1, 2, \ldots, n\}$.
\end{theorem}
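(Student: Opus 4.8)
The plan is to show that the switching operation produces a code $\mathcal{C}'$ with the same size, minimum distance $3$, and covering radius $2$ as the original $RM_q((q-1)m-2,m)$, and that $\mathcal{C}'$ is nonlinear. Since switching replaces one coset of $\mathcal{R}_i$ by a translate of itself, it is a bijection on $\mathbb{F}_q^n$ restricted to the relevant cosets, so immediately $|\mathcal{C}'| = |RM_q((q-1)m-2,m)| = q^{n-m-1}$. I would first fix notation: write $\mathcal{C} = RM_q((q-1)m-2,m)$, and observe that $\mathcal{C}$ partitions into cosets of $\mathcal{R}_i$; the switch alters exactly one such coset $\mathcal{R}_i + \mathbf{x}$, translating it by $\lambda \mathbf{e}_i$.

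For the minimum distance, I would argue that $d(\mathcal{C}') = 3$ by checking distances both within the unchanged part, within the switched coset, and across the two. Within $RM_q((q-1)m-2,m) \setminus (\mathcal{R}_i + \mathbf{x})$ the distance is inherited from $\mathcal{C}$, hence $\geq 3$; within the translated coset $\mathcal{R}_i + \mathbf{x} + \lambda\mathbf{e}_i$ the pairwise distances equal those in $\mathcal{R}_i + \mathbf{x}$ (translation is an isometry) and so are $\geq 3$ since $\mathcal{R}_i$ has minimum distance $3$ by its defining remark. The genuinely new distances are those between a codeword $\mathbf{u}$ in the untouched part and a translated codeword $\mathbf{w} + \lambda\mathbf{e}_i$ with $\mathbf{w} \in \mathcal{R}_i + \mathbf{x}$. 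Here I would use $d(\mathbf{u}, \mathbf{w}+\lambda\mathbf{e}_i) \geq d(\mathbf{u},\mathbf{w}) - 1 \geq 3 - 1 = 2$, so the only danger is distance exactly $2$; I must rule out distance $1$ and then exhibit a pair at distance $3$ to conclude equality. The key point is that $\mathbf{u}$ and $\mathbf{w}$ lie in different cosets of $\mathcal{R}_i$, so $\mathbf{u} - \mathbf{w} \notin \mathcal{R}_i$, in particular $\mathbf{u} - \mathbf{w}$ is not a triple through $P_i$ (Proposition~\ref{Pr:4}); combined with $wt(\mathbf{u}-\mathbf{w}) \geq 3$ and a weight argument on $\mathbf{u} - \mathbf{w} - \lambda\mathbf{e}_i$, this should force $d \geq 2$, and I would produce an explicit triple to show $d = 3$ is attained.

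The heart of the argument, and the main obstacle, is the covering radius: I must verify $\rho(\mathcal{C}') = 2$, i.e. every $\mathbf{y} \in \mathbb{F}_q^n$ is within distance $2$ of $\mathcal{C}'$. Since $\mathcal{C}$ has covering radius $2$, only vectors whose closest $\mathcal{C}$-codewords were removed (those at distance $\leq 2$ from $\mathcal{R}_i + \mathbf{x}$) could lose coverage. I would classify such $\mathbf{y}$ by their distance profile to the switched coset. The delicate case is $\mathbf{y}$ formerly covered only by codewords of $\mathcal{R}_i + \mathbf{x}$ that moved away under translation; here Proposition~\ref{Pr:5} is the decisive tool, guaranteeing that for the relevant weight-$2$ configurations there exists a triple with prescribed support $\{i,j,k\}$ realizing distance $2$ to the appropriate translated codeword. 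I would show that the translation by $\lambda\mathbf{e}_i$ shifts coverage among collinear points through $P_i$ in a way that keeps every such $\mathbf{y}$ within distance $2$ of the new coset. This collinearity/coverage bookkeeping through $P_i$ is the step I expect to require the most care.

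Finally, nonlinearity: I would verify $\mathbf{0} \notin \mathcal{C}'$ is not the issue (one may assume $\mathbf{0}$ stays), but rather exhibit two codewords of $\mathcal{C}'$ whose sum is not in $\mathcal{C}'$. Taking $\mathbf{w} + \lambda\mathbf{e}_i \in \mathcal{R}_i + \mathbf{x} + \lambda\mathbf{e}_i$ and some $\mathbf{u}$ in the untouched linear part, their sum picks up the stray $\lambda\mathbf{e}_i$ term, which cannot be canceled within the subspace structure of $\mathcal{C}$; since only one coset was perturbed, $\mathcal{C}'$ fails to be closed under addition, establishing nonlinearity. I would close by collecting these facts to read off the parameters $(n, q^{n-m-1}, 3; 2)_q$.
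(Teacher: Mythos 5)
Your plan follows the same switching strategy as the paper (size by bijection, minimum distance by comparing cosets of ${\cal R}_i$, covering radius by repairing lost coverage near the switched coset), and the size and nonlinearity parts are fine. But two steps do not close as written. In the minimum-distance argument, your stated conclusion ``this should force $d \geq 2$'' is not what is needed: the triangle inequality already gives $d(\mathbf{u},\mathbf{w}+\lambda\mathbf{e}_i)\geq 2$, and the whole point is to \emph{exclude} $d=2$. The correct chain, which your ingredients do support, is: if $d(\mathbf{u},\mathbf{w}+\lambda\mathbf{e}_i)=2$ then $wt\big((\mathbf{u}-\mathbf{w})-\lambda\mathbf{e}_i\big)=2$; since $\mathbf{u}-\mathbf{w}$ is a nonzero codeword of weight at least $3$, this forces $\mathbf{u}-\mathbf{w}$ to be a triple whose $i$th coordinate equals $\lambda\neq 0$, hence a scalar multiple of a triple with a $1$ in position $i$, hence an element of ${\cal R}_i$ --- contradicting that $\mathbf{u}$ and $\mathbf{w}$ lie in different cosets. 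You have the pieces, but the logic as stated is off by one and never actually rules out distance $2$.

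More seriously, the covering-radius verification --- which you yourself identify as the heart of the proof --- is left as ``bookkeeping I expect to require the most care'' without the idea that resolves it. The paper's resolution is a dichotomy you do not state: take $\mathbf{y}$ at distance $2$ from ${RM}_q((q-1)m-2,m)$ with a nearest codeword $\mathbf{c}\in{\cal R}_i+\mathbf{x}$ and $supp(\mathbf{y}-\mathbf{c})=\{j,k\}$, and split on whether $P_i,P_j,P_k$ are collinear. If they are \emph{not} collinear, choose a triple $\mathbf{c}'$ supported on the line through $P_j,P_k$ agreeing with $\mathbf{y}-\mathbf{c}$ in one coordinate; that line misses $P_i$, so by Proposition~\ref{Pr:4} $\mathbf{c}'\notin{\cal R}_i$, hence $\mathbf{c}+\mathbf{c}'$ lies in the \emph{untouched} part of the code and covers $\mathbf{y}$. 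If they are collinear and $i\neq j,k$, Proposition~\ref{Pr:5} supplies a triple $\mathbf{c}'\in{\cal R}_i$ with support $\{i,j,k\}$ so that $\mathbf{c}+\mathbf{c}'+\lambda\mathbf{e}_i$, which lies in the \emph{translated} coset, covers $\mathbf{y}$ (the case $i\in\{j,k\}$ being immediate). Your sketch invokes only the Proposition~\ref{Pr:5} half; the non-collinear case, where coverage must be restored from the untouched part rather than from the translated coset, is absent, and without it the covering-radius claim is not established.
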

\begin{proof}
It is easy to show that ${\cal C}'$ is nonlinear. Let us show that $d({\cal C}') = 3$. Let ${\bf y} \in ({\cal R}_i + {\bf x})$ and
${\bf z} \in {RM}_q((q-1)m - 2, m ) \setminus ({\cal R}_i + {\bf x})$. Assume that $d({\bf y} + \lambda{\bf e}_i, {\bf z}) = 2$.
Then  $d(\lambda{\bf e}_i, {\bf z} - {\bf y}) = 2$.
Thus, ${\bf z} - {\bf y}$ is a triple that lies on a line passing through the point $P_i$.
So  $({\bf z} - {\bf y}) \in {\cal R}_i$  and ${\bf z} \in {\cal R}_i +  {\bf x}$.
We obtained a contradiction.
Therefore, the minimum distance of the code ${\cal C}'$ is 3.

Now let us show that  $\rho({\cal C}') = 2$.  Let ${\bf y} \in {RM}_q^{(1)}((q-1)m - 2, m)$.
In this case, it is obvious that there exists ${\bf c} \in {\cal C}'$ such that $d({\bf c}, {\bf y}) \leq 2$.

Let ${\bf y} \in {RM}_q^{(2)}((q-1)m - 2, m)$.
Then there exists ${\bf c} \in {RM}_q((q-1)m - 2, m)$ such that $d({\bf c}, {\bf y}) = 2$.
If ${\bf c} \in {RM}_q((q-1)m - 2, m ) \setminus ({\cal R}_i + {\bf x})$, then ${\bf c} \in {\cal C}'$.
Let ${\bf c} \in {\cal R}_i + {\bf x}$.
Since $wt({\bf y} -  {\bf c}) = 2$, let $supp({\bf y} -  {\bf c}) = \{j, k\}$.
\begin{enumerate}
\item If the points $P_i, P_j, P_k$  are not collinear,
then, by Proposition~\ref{Pr:4}, there exists  a triple ${\bf c}' \notin {\cal R}_i$
such that $d({\bf y},{\bf c}' +  {\bf c}) = 2$.
\item If the  points $P_i, P_j, P_k$  are  collinear.
If $i = j$ or  $i = k$, then it is obvious that there is a triple
${\bf c}' \in {\cal R}_i$  such that $d({\bf y},{\bf c}' +  {\bf c} + \lambda{\bf e}_i) = 2$.
If $i \neq j$,  $i \neq k$,
then, by Proposition~\ref{Pr:5}, there exists  a triple ${\bf c}' \in {\cal R}_i$
such that $d({\bf y},{\bf c}' +  {\bf c} + \lambda{\bf e}_i) = 2$.
\end{enumerate}
\end{proof}

\begin{theorem}
\label{Pr:7}
Let $q\geq 3$, $m \geq 2$, and $n = q^m$.
Let ${\cal R}_i + {\bf x}_t \subset {RM}_q((q-1)m - 2, m )$, $ 1\leq t \leq q^{[m]_q - \, m}$,
$( {\cal R}_i + {\bf x}_{t_1}) \cap ( {\cal R}_i + {\bf x}_{t_2}) = \varnothing$
for all \, $1\leq t_1 \leq q^{[m]_q - m}$,  $1\leq t_2 \leq q^{[m]_q - m}$,  $t_1 \neq t_2$.
Then
$$ {\cal C}' =  \bigcup_{t = 1}^{q^{[m]_q - m}} ( {\cal R}_i + {\bf x}_t + \lambda_t{\bf e}_i)$$
is a  $q$-ary  quasi-perfect code  with  parameters $(n, q^{n - m - 1}, 3; 2)_q$,
$\lambda_t \in \mathbb{F}_{q}$,  $i \in\{1, 2, \ldots, n\}$.
\end{theorem}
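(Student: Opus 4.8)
The plan is to verify the three parameters of $\mathcal{C}'$—size $q^{n-m-1}$, minimum distance $3$, and covering radius $2$—in that order, viewing $\mathcal{C}'$ as the disjoint union of the $q^{[m]_q-m}$ shifted cosets $\mathcal{R}_i+\mathbf{x}_t+\lambda_t\mathbf{e}_i$. (Recall that by Proposition~\ref{Pr:3} the index of $\mathcal{R}_i$ in ${RM}_q((q-1)m-2,m)$ is $q^{(n-m-1)-(n-[m]_q-1)}=q^{[m]_q-m}$, which is exactly the number of cosets.) First I would pin down the size by showing the shifted cosets stay pairwise disjoint: if $\mathcal{R}_i+\mathbf{x}_{t_1}+\lambda_{t_1}\mathbf{e}_i$ and $\mathcal{R}_i+\mathbf{x}_{t_2}+\lambda_{t_2}\mathbf{e}_i$ met, then $(\mathbf{x}_{t_1}-\mathbf{x}_{t_2})+(\lambda_{t_1}-\lambda_{t_2})\mathbf{e}_i\in\mathcal{R}_i$. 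Since $\mathbf{x}_{t_1}-\mathbf{x}_{t_2}$ and $\mathcal{R}_i$ both lie in the code, a nonzero multiple of $\mathbf{e}_i$ would be forced into the code, impossible as its minimum distance is $3$; hence $\lambda_{t_1}=\lambda_{t_2}$ and then $\mathbf{x}_{t_1}-\mathbf{x}_{t_2}\in\mathcal{R}_i$, i.e. $t_1=t_2$. Thus $|\mathcal{C}'|=q^{[m]_q-m}\cdot q^{\dim\mathcal{R}_i}=q^{[m]_q-m}\cdot q^{n-[m]_q-1}=q^{n-m-1}$.

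Next I would treat the minimum distance, mirroring Theorem~\ref{Pr:6}. Take distinct $\mathbf{u}=\mathbf{c}_1+\lambda_{t_1}\mathbf{e}_i$ and $\mathbf{v}=\mathbf{c}_2+\lambda_{t_2}\mathbf{e}_i$ with $\mathbf{c}_1,\mathbf{c}_2$ in the respective cosets, and write $\mathbf{u}-\mathbf{v}=\mathbf{a}+\mu\mathbf{e}_i$ with $\mathbf{a}=\mathbf{c}_1-\mathbf{c}_2\in{RM}_q((q-1)m-2,m)$ and $\mu=\lambda_{t_1}-\lambda_{t_2}$. If $t_1=t_2$ then $\mu=0$ and $\mathbf{u}-\mathbf{v}$ is a nonzero element of $\mathcal{R}_i$, of weight at least $3$. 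If $t_1\neq t_2$ then $\mathbf{a}\notin\mathcal{R}_i$; assuming $wt(\mathbf{a}+\mu\mathbf{e}_i)\le 2$ gives $wt(\mathbf{a})\le 3$, so $\mathbf{a}$ is a triple, and the weight can drop only if $i\in supp(\mathbf{a})$ with $\mu=-a_i$, whence by Proposition~\ref{Pr:4} the triple $\mathbf{a}$ lies on a line through $P_i$ and, after scaling, belongs to $\mathcal{R}_i$—a contradiction. Hence $d(\mathcal{C}')\ge 3$, with equality from pairs inside one coset.

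The covering radius is the heart of the matter and the step I expect to be the main obstacle. The lower bound $\rho(\mathcal{C}')\ge 2$ is immediate: the radius-one balls around the $q^{n-m-1}$ codewords cannot cover $\mathbb{F}_q^n$ (sphere-packing), so $\mathcal{C}'$ is not perfect. For the upper bound, fix $\mathbf{y}$. If $d(\mathbf{y},{RM}_q((q-1)m-2,m))\le 1$, let $\mathbf{c}_0$ be a nearest codeword in coset $t$; then $\mathbf{c}_0+\lambda_t\mathbf{e}_i\in\mathcal{C}'$ lies within distance $2$ of $\mathbf{y}$, since changing one coordinate raises the distance by at most $1$. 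The essential case is $\mathbf{y}\in{RM}_q^{(2)}((q-1)m-2,m)$, and here lies the genuine difficulty: in Theorem~\ref{Pr:6} only one coset is switched, so a distance-$2$ codeword lying in a \emph{different} coset is automatically unswitched and usable—but now every coset is shifted, so that escape is gone. I would remove it by proving a uniform claim: \emph{for every $\mathbf{y}\in{RM}_q^{(2)}((q-1)m-2,m)$ there is a codeword $\mathbf{c}''$ with $d(\mathbf{y},\mathbf{c}'')=2$ and $i\in supp(\mathbf{y}-\mathbf{c}'')$.} Granting this, the shifted word $\mathbf{c}''+\lambda_t\mathbf{e}_i\in\mathcal{C}'$ (with $t$ the coset of $\mathbf{c}''$) still satisfies $d(\mathbf{y},\mathbf{c}''+\lambda_t\mathbf{e}_i)\le 2$, because coordinate $i$ is already nonzero in $\mathbf{y}-\mathbf{c}''$, so altering it cannot push the weight past $2$.

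To establish the claim I would pass to syndromes relative to $H=[\mathbf{h}_1,\dots,\mathbf{h}_n]$. A weight-$2$ coset leader $\alpha\mathbf{e}_j+\beta\mathbf{e}_k$ of $\mathbf{y}$ has syndrome $\sigma=\alpha\mathbf{h}_j+\beta\mathbf{h}_k$ with $\alpha,\beta\neq 0$; we may assume $i\notin\{j,k\}$, else $\mathbf{c}''=\mathbf{c}_0$ already works. Since the top entry of every column $\mathbf{h}_l$ equals $1$, setting $\delta=\alpha+\beta-\gamma$ and solving $\gamma\mathbf{h}_i+\delta\mathbf{h}_l=\sigma$ forces $P_l$ to be the affine combination $P_l=(\alpha P_j+\beta P_k-\gamma P_i)/\delta$, a genuine point of $AG(m,q)$ for every $\gamma$ with $\delta\neq 0$. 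Choosing $\gamma\in\mathbb{F}_q\setminus\{0,\alpha+\beta\}$ (possible since $q\ge 3$) yields a weight-$2$ vector $\gamma\mathbf{e}_i+\delta\mathbf{e}_l$ of syndrome $\sigma$ with $i$ in its support, hence a codeword $\mathbf{c}''=\mathbf{y}-\gamma\mathbf{e}_i-\delta\mathbf{e}_l$—provided $P_l\neq P_i$. The degenerate possibility $P_l=P_i$ is independent of $\gamma$ and amounts to $\alpha P_j+\beta P_k=(\alpha+\beta)P_i$; but that identity makes $-(\alpha+\beta)\mathbf{e}_i+\alpha\mathbf{e}_j+\beta\mathbf{e}_k$ a triple of the code, giving a codeword at distance $1$ from $\mathbf{y}$ and contradicting $\mathbf{y}\in{RM}_q^{(2)}((q-1)m-2,m)$. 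Thus $P_l\neq P_i$ always holds, the claim follows, and combining the three cases gives $\rho(\mathcal{C}')=2$. In the collinear subcase one may instead invoke Proposition~\ref{Pr:5} directly to produce such a $\mathbf{c}''$; it is precisely the non-collinear subcase, where Theorem~\ref{Pr:6} relied on an unswitched coset, that this syndrome argument is needed to cover.
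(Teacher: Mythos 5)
Your proof is correct, and on the decisive step it takes a genuinely different---and more careful---route than the paper. The size and minimum-distance parts coincide in substance with the paper's (the paper gets $d({\cal C}')=3$ by translating both points by $\lambda_{t_2}{\bf e}_i$ and invoking Theorem~\ref{Pr:6}; your direct analysis of ${\bf a}+\mu{\bf e}_i$ is the same computation unrolled, and your explicit disjointness check for the shifted cosets is a detail the paper leaves implicit). The divergence is exactly where you predicted it. The paper's covering-radius argument repeats the two-case analysis of Theorem~\ref{Pr:6}: in the collinear case it uses Proposition~\ref{Pr:5} to stay inside the shifted coset, and in the non-collinear case it produces, via Proposition~\ref{Pr:4}, a triple ${\bf c}'\notin{\cal R}_i$ with $d({\bf y},{\bf c}'+{\bf c})=2$ and ${\bf c}'+{\bf c}\notin{\cal R}_i+{\bf x}_t$---and stops there. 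In Theorem~\ref{Pr:6} that sufficed because the receiving coset was unswitched; in Theorem~\ref{Pr:7} the word ${\bf c}'+{\bf c}$ lands in some ${\cal R}_i+{\bf x}_{t'}$ that is itself shifted by $\lambda_{t'}{\bf e}_i$, and since $supp({\bf y}-{\bf c}-{\bf c}')$ lies on the line through $P_j,P_k$, which misses $P_i$, the actual codeword ${\bf c}'+{\bf c}+\lambda_{t'}{\bf e}_i$ of ${\cal C}'$ sits at distance $3$ from ${\bf y}$ whenever $\lambda_{t'}\neq 0$. Your uniform claim---every ${\bf y}\in{RM}_q^{(2)}((q-1)m-2,m)$ admits a distance-$2$ codeword ${\bf c}''$ with $i\in supp({\bf y}-{\bf c}'')$---proved by solving $\gamma{\bf h}_i+\delta{\bf h}_l=\alpha{\bf h}_j+\beta{\bf h}_k$ in syndromes, closes precisely this hole: the shift in coordinate $i$ then cannot increase the distance beyond $2$. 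The degenerate case $P_l=P_i$ is correctly ruled out (one micro-remark: when $\alpha+\beta=0$ the vector $-(\alpha+\beta){\bf e}_i+\alpha{\bf e}_j+\beta{\bf e}_k$ is a weight-$2$ codeword rather than a triple, but that contradicts $d=3$ just as well). In short, what your approach buys is a complete proof of the covering-radius bound for arbitrary shift vectors $(\lambda_1,\dots,\lambda_{q^{[m]_q-m}})$, whereas the paper's shorter argument, as written, does not account for the shift of the coset it jumps into.
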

\begin{proof}
The number of cosets ${\cal R}_i + {\bf x}_t$ is $q^{[m]_q - m}$.
The cosets ${\cal R}_i + {\bf x}_t$  form a partition of ${RM}_q((q-1)m - 2, m)$.
Let us show that $d({\cal C}') = 3$.
Let ${\bf y} \in ({\cal R}_i + {\bf x}_{t_1} + \lambda_{t_1}{\bf e}_i)$,
${\bf z} \in  ({\cal R}_i + {\bf x}_{t_2} +  \lambda_{t_2}{\bf e}_i)$, and
$( {\cal R}_i + {\bf x}_{t_1}) \cap ( {\cal R}_i + {\bf x}_{t_2}) = \varnothing.$
Since
$$({\bf y } - \lambda_{t_2}{\bf e}_i) \in ({\cal R}_i + {\bf x}_{t_1} + (\lambda_{t_1} - \lambda_{t_2}){\bf e}_i), $$
$$({\bf z } - \lambda_{t_2}{\bf e}_i) \in ({\cal R}_i + {\bf x}_{t_2}) \subset {RM}_q((q-1)m - 2, m )  \setminus ({\cal R}_i + {\bf x}_{t_1}),$$
by Theorem~\ref{Pr:6} we have $d(({\bf y } - \lambda_{t_2}{\bf e}_i),({\bf z } - \lambda_{t_2}{\bf e}_i)) \geq 3.$

Now let us show that  $\rho({\cal C}') = 2$.
Let ${\bf y} \in {RM}_q^{(2)}((q-1)m - 2, m)$.
Then there exists ${\bf c} \in {RM}_q((q-1)m - 2, m)$ such that $d({\bf c}, {\bf y}) = 2$.
Let ${\bf c} \in {\cal R}_i + {\bf x}_t$.
Since $wt({\bf y} -  {\bf c}) = 2$, let $supp({\bf y} -  {\bf c}) = \{j, k\}$.
Assume that  the  points $P_i, P_j, P_k$  are  collinear.
If $i = j$ or  $i = k$, then it is obvious that there is a triple
${\bf c}' \in {\cal R}_i$  such that $d({\bf y},{\bf c}' +  {\bf c} + \lambda_t{\bf e}_i) = 2$
and ${\bf c}' +  {\bf c} \in {\cal R}_i + {\bf x}_t$ .
If $i \neq j$,  $i \neq k$,
then, by Proposition~\ref{Pr:5}, there exists  a triple ${\bf c}' \in {\cal R}_i$
such that $d({\bf y},{\bf c}' +  {\bf c} + \lambda_t{\bf e}_i) = 2$ and ${\bf c}' +  {\bf c} \in {\cal R}_i + {\bf x}_t$ .

If points $P_i, P_j, P_k$  are not collinear,
then, by Proposition~\ref{Pr:4}, there exists  a triple ${\bf c}' \notin {\cal R}_i$
such that $d({\bf y},{\bf c}' +  {\bf c}) = 2$ and ${\bf c}' +  {\bf c} \notin {\cal R}_i + {\bf x}_t $.
\end{proof}


\section{Lower bound}\label{sec:low}


\begin{theorem}
\label{Pr:8}
Let $q\geq 3$,  $m \geq 2$, and $n = q^m $. Then  there are more than
$q^{q^{[m]_q -  m} - \, n(m + 2)}$
nonequivalent $q$-ary  quasi-perfect codes with  parameters $(n, q^{n - m - 1}, 3; 2)_q$.
\end{theorem}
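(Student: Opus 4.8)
The plan is to count directly the number of codes produced by the switching construction of Theorem~\ref{Pr:7} for a fixed coordinate $i$ (say $i=1$), and then to divide by the largest possible size of an equivalence class. First I would observe that the cosets ${\cal R}_i + {\bf x}_t$ appearing in Theorem~\ref{Pr:7} range over \emph{all} cosets of ${\cal R}_i$ inside ${RM}_q((q-1)m - 2, m)$; by Proposition~\ref{Pr:3} their number is $q^{[m]_q - m}$, and the construction assigns to each of them an independent scalar $\lambda_t \in \mathbb{F}_q$. This produces $q^{q^{[m]_q - m}}$ codes, each of which is a quasi-perfect $(n, q^{n-m-1}, 3; 2)_q$ code by Theorem~\ref{Pr:7}.

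The key step is to verify that distinct scalar assignments $(\lambda_t)$ give distinct codes. For this I would pass to the quotient $V = \mathbb{F}_{q}^n / {\cal R}_i$, and set $W = {RM}_q((q-1)m-2,m)/{\cal R}_i$ and $u = \bar{\bf e}_i$, the image of ${\bf e}_i$. Since ${RM}_q((q-1)m-2,m)$ has minimum distance $3$, the weight-one vector ${\bf e}_i$ is not contained in it, so $u \notin W$ and the sum $W + \mathbb{F}_{q}u = W \oplus \mathbb{F}_{q}u$ is direct. Because the $\bar{\bf x}_t$ exhaust $W$, under this decomposition the code ${\cal C}'$ corresponds to the set $\{\,\bar{\bf x}_t + \lambda_t u \mid t\,\} = \{\,(w,\lambda_w) \mid w \in W\,\}$, i.e.\ to the graph of the function $w \mapsto \lambda_w$. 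Two graphs coincide if and only if the functions coincide, so the assignment $(\lambda_t) \mapsto {\cal C}'$ is injective, giving exactly $q^{q^{[m]_q - m}}$ pairwise distinct codes.

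Next I would bound the number of equivalence transformations. An equivalence is determined by a translation vector ${\bf v} \in \mathbb{F}_{q}^n$ and an $n \times n$ monomial matrix over $\mathbb{F}_{q}$; there are $q^n$ choices for ${\bf v}$, while a monomial matrix is a permutation ($n!$ choices) together with $n$ nonzero scalars ($(q-1)^n$ choices). Hence the equivalence group has order at most $q^n \cdot n!\,(q-1)^n$, so by the orbit–stabilizer principle each equivalence class of codes contains at most this many members, and the number of pairwise nonequivalent codes is at least
$$ \frac{q^{q^{[m]_q - m}}}{q^n \cdot n!\,(q-1)^n}. $$

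Finally I would estimate the denominator crudely. Using $n! \le n^n = (q^m)^{q^m} = q^{mn}$ together with $(q-1)^n < q^n$, the denominator is strictly smaller than $q^n \cdot q^{mn} \cdot q^n = q^{n(m+2)}$, so the displayed ratio exceeds $q^{q^{[m]_q - m} - n(m+2)}$, which is precisely the claimed bound. The only genuinely delicate point is the injectivity established in the second paragraph; the remaining steps are the standard count-modulo-symmetry estimate familiar from Vasil'ev-type lower bounds, and the concluding inequality is a routine overestimate of $n!$ and $(q-1)^n$.
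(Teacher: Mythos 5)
Your proposal is correct and follows essentially the same route as the paper: count the $q^{q^{[m]_q-m}}$ codes from Theorem~\ref{Pr:7}, bound each equivalence class by $(q-1)^n n!\cdot q^n \le q^{n(m+2)}$, and divide. You additionally spell out two points the paper leaves implicit — the injectivity of the map from scalar assignments $(\lambda_t)$ to codes (via the quotient by ${\cal R}_i$) and the explicit estimate $n!\le q^{mn}$, $(q-1)^n<q^n$ — both of which are sound.
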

\begin{proof}
The number of words of length $q^{[m]_q -  m}$  over $\mathbb{F}_{q}$ is $q^{q^{[m]_q -  m}}$.
Hence, by Theorem~\ref{Pr:7},  the number of different quasi-perfect codes
with  parameters $(n, q^{n - m - 1}, 3; 2)_q$ is $q^{q^{[m]_q -  m}}$.
The number of monomial matrices   of size $n\times n$ over $\mathbb{F}_{q}$ is $(q - 1)^{n}n!$.
Any equivalence class contains not more than $ (q - 1)^{n}n! \cdot q^n$
different quasi-perfect codes with  parameters $(n, q^{n - m - 1}, 3; 2)_q$.
Therefore, there are more than
$q^{q^{[m]_q -  m} - \, n(m + 2)}$
nonequivalent $q$-ary quasi-perfect codes
with  parameters $(n, q^{n - m - 1}, 3; 2)_q$.
\end{proof}

\begin{corollary}
\label{Pr:9}
Let $q\geq 3$, $m \geq 2$, and $n = q^m $. Then  there are more than
$q^{q^{cn}}$
nonequivalent $q$-ary nonlinear quasi-perfect codes with  parameters $(n, q^{n - m - 1}, 3; 2)_q$
for all sufficiently large $n$ and a constant
$c = \frac{1}{q} - \varepsilon$.
\end{corollary}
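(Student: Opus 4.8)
The goal of Corollary~\ref{Pr:9} is to convert the explicit counting bound
$q^{q^{[m]_q - m} - n(m+2)}$ from Theorem~\ref{Pr:8} into the asymptotic
statement that more than $q^{q^{cn}}$ nonequivalent codes exist, with
$c = \frac{1}{q} - \varepsilon$. The plan is to analyze the growth of the
exponent $q^{[m]_q - m} - n(m+2)$ as a function of $n = q^m$, showing that
the dominant term $q^{[m]_q - m}$ grows like $q^{cn}$ for the stated constant.

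First I would unfold the $q$-analogue $[m]_q = \frac{q^m - 1}{q - 1}
= 1 + q + \cdots + q^{m-1}$. The key quantity in the exponent of the inner
power is $[m]_q - m$. Writing $n = q^m$, I would estimate
$[m]_q = \frac{q^m - 1}{q - 1} = \frac{n - 1}{q - 1}$, so that
$[m]_q - m = \frac{n-1}{q-1} - m$. Since $m = \log_q n$ grows only
logarithmically in $n$, the subtracted term $m$ is negligible compared to
$\frac{n-1}{q-1}$. Thus $[m]_q - m = \frac{n}{q-1}(1 - o(1))$ as
$n \to \infty$.

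Next I would pin down the constant. Note $\frac{1}{q-1} > \frac{1}{q}$, so in
fact $[m]_q - m$ grows \emph{faster} than $\frac{n}{q}$; the bound
$c = \frac{1}{q} - \varepsilon$ therefore holds comfortably. Concretely, for
any fixed $\varepsilon > 0$ I would show that for all sufficiently large $n$
one has $[m]_q - m \geq cn = \left(\frac{1}{q} - \varepsilon\right)n$. This
reduces to checking $\frac{n-1}{q-1} - \log_q n \geq
\left(\frac{1}{q} - \varepsilon\right) n$, which holds for large $n$ because
$\frac{1}{q-1} > \frac{1}{q}$ gives slack that easily absorbs the logarithmic
term $\log_q n$. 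Hence $q^{[m]_q - m} \geq q^{cn}$ eventually.

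Finally I would handle the additive correction $-n(m+2)$ in the outer
exponent. Since the outer exponent is $q^{[m]_q - m} - n(m+2)$ and the second
term $n(m+2) = n(\log_q n + 2)$ is only polynomial-times-logarithmic in $n$,
it is dwarfed by the doubly-growing first term $q^{[m]_q - m}$. Thus for large
$n$ we still have $q^{[m]_q - m} - n(m+2) \geq q^{cn}$, and by
Theorem~\ref{Pr:8} the number of nonequivalent codes exceeds
$q^{q^{cn}}$. The only real point requiring care is verifying that the
logarithmic and linear correction terms are genuinely absorbed by the slack
between $\frac{1}{q-1}$ and $\frac{1}{q}$; this is the main (though routine)
obstacle, and it is precisely what forces the statement to be asymptotic in
$n$ rather than uniform. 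One should also recall that these codes are nonlinear
by Theorem~\ref{Pr:6}, so the word ``nonlinear'' in the corollary is justified.
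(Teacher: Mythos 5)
Your proposal is correct and follows exactly the route the paper intends: the paper states Corollary~\ref{Pr:9} without proof as an immediate consequence of Theorem~\ref{Pr:8}, and your verification that $[m]_q - m = \frac{n-1}{q-1} - \log_q n \geq \left(\frac{1}{q}-\varepsilon\right)n$ for large $n$ (using the slack between $\frac{1}{q-1}$ and $\frac{1}{q}$) and that the correction term $n(m+2)$ is absorbed supplies precisely the omitted routine asymptotics.
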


\end{document}